\DeclareFontFamily{OT1}{pzc}{}
\DeclareFontShape{OT1}{pzc}{m}{it}{<-> s * [1.10] pzcmi7t}{}
\DeclareMathAlphabet{\mathpzc}{OT1}{pzc}{m}{it}
\newtheorem{theorem}{Theorem}[section]
\newtheorem{lemma}[theorem]{Lemma}
\newtheorem{proposition}[theorem]{Proposition}
\newtheorem{remark}[theorem]{Remark}
\providecommand{\R}{\mathbb{R}}
\providecommand{\SO}{\mathbf{SO}}
\providecommand{\SL}{\mathbf{SL}}
\providecommand{\SE}{\mathbf{SE}}
\providecommand{\grpG}{\mathbf{G}}
\providecommand{\gothg}{\mathfrak{g}}
\providecommand{\se}{\mathfrak{se}}
\providecommand{\calN}{\mathcal{N}}
\providecommand{\vecV}{\mathbb{V}}
\providecommand{\Id}{I} % identity of a matrix group.
\DeclareMathOperator{\tr}{tr}
\providecommand{\trace}[1]{\tr\left(#1\right)}
\DeclareMathOperator{\Ad}{Ad}
\providecommand{\pr}{\mathbb{P}} % projection
\providecommand{\Lyap}{\mathcal{L}} %% aggregate cost
\providecommand{\tT}{\mathrm{T}} % tangent bundles
\providecommand{\td}{\mathrm{d}}
\providecommand{\tD}{\mathrm{D}}
\providecommand{\mr}[1]{{#1}^\circ} % reference element.
\providecommand{\ub}[1]{\underline{#1}}
\providecommand{\ob}[1]{\overline{#1}} % homogeneous vector
\providecommand{\scirc}{%
    \hbox{\fontfamily{\rmdefault}\fontsize{0.4\dimexpr(\f@size pt)}{0}\selectfont{\raisebox{-0.52ex}[0ex][-0.52ex]{$\circ$}}}}
\mathchardef\mhyphen="2D
\providecommand{\etal}{\textit{et al.}~}
\begin{document}
%===============================================================================

\title{Equivariant Systems Theory and Observer Design for Second Order Kinematic Systems on Matrix Lie Groups}
\headertitle{Equiv. Sys. Theory and Observer Design for $2^{nd}$ Order Kin. Sys. on Matrix Lie Groups}
%% The headertitle will be printed in the header.  If the main title is too long define a shorter headertitle

\author{
\href{https://orcid.org/0000-0002-7764-298X}{\includegraphics[scale=0.06]{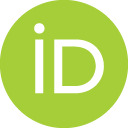}\hspace{1mm}
Yonhon Ng}
\\
Systems Theory and Robotics Group \\
Australian National University \\
ACT, 2601, Australia \\
\texttt{yonhon.ng@anu.edu.au} \\
\And	
\href{https://orcid.org/0000-0003-4391-7014}{\includegraphics[scale=0.06]{orcid.png}\hspace{1mm}
Pieter van Goor}
\\
Systems Theory and Robotics Group \\
Australian National University \\
ACT, 2601, Australia \\
\texttt{pieter.vangoor@anu.edu.au} \\
\And	
\href{https://orcid.org/0000-0002-7779-1264}{\includegraphics[scale=0.06]{orcid.png}\hspace{1mm}
Tarek Hamel}
\\
I3S (University C\^ote d'Azur, CNRS, Sophia Antipolis)\\
and Insitut Universitaire de France\\
\texttt{thamel@i3s.unice.fr} \\
\And	
\href{https://orcid.org/0000-0002-7803-2868}{\includegraphics[scale=0.06]{orcid.png}\hspace{1mm}
Robert Mahony}
\\
Systems Theory and Robotics Group \\
Australian National University \\
ACT, 2601, Australia \\
\texttt{robert.mahony@anu.edu.au} \\
}

\maketitle

%===============================================================================
\begin{abstract}
This paper presents the equivariant systems theory and observer design for second order kinematic systems on matrix Lie groups.
The state of a second order kinematic system on a matrix Lie group is naturally posed on the tangent bundle of the group with the inputs lying in the tangent of the tangent bundle known as the double tangent bundle.
We provide a simple parameterization of both the tangent bundle state space and the input space (the fiber space of the double tangent bundle) and then introduce a semi-direct product group and group actions onto both the state and input spaces.
We show that with the proposed group actions the second order kinematics are equivariant.
An equivariant lift of the kinematics onto the symmetry group is defined and used to design a nonlinear observer on the lifted state space using nonlinear constructive design techniques.
A simple hovercraft simulation verifies the performance of our observer.
\end{abstract}

% \keywords{
% XXXX
% }

%===============================================================================
\section{Introduction}
%\todo{[RM: intro}]

Observer design for second order kinematic systems is of interest since there are many applications where direct velocity measurements are not available but where acceleration is measured by an Inertial Measurement Unit (IMU) or can be inferred from torque or force measurements.
However, the majority of observer designs for systems on Lie groups and homogeneous spaces are developed for first order kinematics.
Attitude estimation is naturally posed on the Special Orthogonal group $\SO(3)$ \cite{Bonnabel08,Mahony08,Grip11,Hua14} where an IMU directly provides the angular velocity as a measured input.
Pose estimation on $\SE(3)$ has been considered \cite{Bras09,Baldwin09,Hua18} in the same framework, as has estimating homographies on the special linear group $\SL(3)$ \cite{Hamel11,Mahony12} although the inputs are not directly available and depend on additional parameter estimation.
The Simultaneous Localisation and Mapping (SLAM) problem has also been studied \cite{2016_Barrau_arxive,2017_Barrau_tac,Mahony17} in the same context.

The fact that an IMU measures angular velocity (the first order kinematic input of rotation) and acceleration (the second order kinematic input for translation) has motivated a number of works that use part first order and part second order kinematic models.
If linear velocity and linear acceleration are measured, either expressed in an inertial frame~\cite{Bonnable09}, body-fixed frame~\cite{Hua16} or a combination of both~\cite{Hua17}, velocity aided attitude estimation can be used to better estimate the attitude and linear velocity of an accelerated body.
Nonlinear observers on Special Euclidean group $\SE(3)$ where velocity, acceleration and landmark measurements are used to achieve local exponential stability result~\cite{Vasconcelos10,Hua11} have also been studied.
Br\'as \etal \cite{Bras09} estimates the pose of an unmanned aerial vehicle with acoustic ranging and inertial sensor and Hua \etal \cite{Hua18} developed a Riccati observer with landmark and inertial measurements.
However, these work do not exploit the underlying symmetry of the kinematic system in their observer design.
For pure second order kinematics, the authors earlier work considers the case of second order kinematics on $\SO(3)$~\cite{Ng19} and $\SE(3)$~\cite{Ng20a}.

In this paper we consider second order kinematic systems on general matrix Lie groups.
The second order kinematics can be written in left-trivialised form as first order kinematics on the Lie group coupled to velocity kinematics on the Lie-algebra.
We define a symmetry group comprised of a semi-direct product of the system group acting on its own Lie algebra and show that this group acts transitively on the left trivialised representation of the tangent bundle of the system state space.
A key contribution of the paper is to introduce a velocity input space that acts in both the first and second order components of the system kinematics modelling the fibre space of the double tangent bundle of the matrix Lie group of the system.
The component of the input velocity that acts in the first order kinematics is virtual (unless there are direct velocity measurements available), however, it is a core construction to understand the equivariance of the system.
We define a novel group action on the input space, and with this symmetry demonstrate that the full second order kinematics are equivariant.
Although the symmetry for the tangent bundle of a Lie-group has been known since the seventies \cite{Brockett72}, we believe this is the first paper where equivariance of the natural second order kinematics has been demonstrated.
The equivariance of the system can be exploited to lift the kinematics onto the symmetry group and from here it is straightforward to build an observer based on constructive nonlinear control techniques.
In this paper, we consider the case where the state can be partially measured.
The performance of the proposed observer is demonstrated on a simulation of a simple hovercraft model equipped with an inertial measurement unit for velocity measurement.

The remainder of the paper consists of seven sections and an appendix.
Section~\ref{sec:prelim} discusses preliminaries and notations that will be used in later part of the paper. 
Section~\ref{sec:problem} presents the problem formulation, where the second order system states, kinematics and output are defined. Section~\ref{sec:symmetry} describes the symmetry group and the equivariant group actions on the states and inputs.
Section~\ref{sec:lift} covers the equivariant lift, origin point and the projected kinematics.
Section~\ref{sec:observer} details the proposed observer on the symmetry group, along with the convergence proof.
Section~\ref{sec:experiment} shows the simulation experiment of a hovercraft moving on a 2D plane, which verifies the performance of our proposed observer.
Finally, the conclusions are drawn in Section~\ref{sec:conclusion}.
The appendix provides a modified version of Barbalat's Lemma \cite{ms93-rap} used in the main convergence proof.

%%%%%%%%%%%%%%%%%%%%%%%%%%%%%%%%%%%%%%%%%%%%%%%%%%%%%%%%%%%%%%%%%%%%
\section{Preliminaries and Notation}
\label{sec:prelim}
Let $\grpG \subset \R^{N \times N}$ be a matrix Lie group and let $\gothg \subset \R^{N \times N}$ be its Lie-algebra as a linear subspace of matrix space.
Let $\pr_\gothg : \R^{N \times N} \to \gothg$ denote the projection of $\R^{N \times N}$ onto $\gothg$ with respect to the Euclidean trace inner product.
That is, for any $\Gamma \in \gothg$ and any $M \in R^{N \times N}$, then
\[
\trace{\Gamma^\top M } = \trace{\Gamma^\top \pr_\gothg (M)}.
\]

Define a Lie-group $\grpG^\ltimes_\gothg = \grpG \ltimes \gothg$ as a semi-direct product of $\grpG$ with $\gothg$, with group multiplication given by
\begin{equation}
(A,a) \cdot (B,b) = (AB, a + \Ad_A b),
\end{equation}
for $(A,a)$ and $(B,b) \in \grpG^\ltimes_\gothg$.
The group identity is $(I_4, 0)$ and group inverse is given by
\begin{equation}
(A,a)^{-1} = (A^{-1}, -\Ad_{A^{-1}} a).
\end{equation}
This group was first proposed by Brockett and Sussmann~\cite{Brockett72}.

The Lie-algebra of $\grpG^\ltimes_\gothg$ is denoted $\gothg^\ltimes_\gothg$ and has elements $(w_1,w_2) \in \gothg \times \gothg$.
We identify the Lie-algebra $\grpG^\ltimes_\gothg = T_{(I,0)} \grpG^\ltimes_\gothg$ with the tangent space of $\grpG^\ltimes_\gothg$ at the identity.

The group $\grpG^\ltimes_\gothg$ is not a matrix group as written and some care must be taken with basic operations such as left and right translation and computing adjoint operators.
In particular, we will need the left differential $\td L_{(A,a)} : T_{(I,0)} \grpG^\ltimes_\gothg \to T_{(A,a)} \grpG^\ltimes_\gothg$ in order to parameterize the tangent space
\[
T_{(A,a)} \grpG^\ltimes_\gothg = \{ \td L_{(A,a)} (w_1,w_2) \;|\; (w_1,w_2) \in \gothg^\ltimes_\gothg \},
\]
and the Adjoint
$\Ad_{(A,a)} : \gothg^\ltimes_\gothg \to \gothg^\ltimes_\gothg$ to verify equivariance properties of the lifted system.

\begin{lemma} \label{lem:dL}
The differential of left translation on $\grpG^\ltimes_\gothg$ at the origin is given by
\begin{align}
\left. \td L_{(A,a)} \right|_{(I,0)} [(w_1, w_2)] = (A w_1, \Ad_A w_2).
\label{eq:dL_A}
\end{align}

The Adjoint is given by
\begin{align}
\label{eq:Adjoint}
\Ad_{(A,a)} (w_1,w_2) = \big( \Ad_A w_1, \Ad_A w_2  - [\Ad_A w_1 , a ] \big),
\end{align}
where $[\cdot, \cdot]$ denotes the Lie-bracket on $\gothg$.
\end{lemma}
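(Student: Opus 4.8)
The plan is to verify both identities by the standard device of differentiating the relevant maps along curves through the group identity. Since $T_{(I,0)}\grpG^\ltimes_\gothg$ is identified with $\gothg \times \gothg$, I represent a tangent vector $(w_1,w_2)$ by a smooth curve $(B(t),b(t))$ with $(B(0),b(0)) = (I,0)$, $\dot B(0) = w_1 \in \gothg = T_I\grpG$, and $\dot b(0) = w_2 \in \gothg$. All derivatives below are taken at $t=0$.

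For the left differential, I apply $L_{(A,a)}$ to the curve, which by the group product gives $(A B(t),\, a + \Ad_A b(t))$. Differentiating the two components is immediate: the first yields $A\dot B(0) = A w_1$, and since $\Ad_A$ is a fixed linear endomorphism of $\gothg$, the second yields $\Ad_A \dot b(0) = \Ad_A w_2$. This establishes \eqref{eq:dL_A}.

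For the Adjoint I would first put the conjugation map $C_{(A,a)}(B,b) = (A,a)\cdot(B,b)\cdot(A,a)^{-1}$ into closed form. Using the inverse formula $(A,a)^{-1} = (A^{-1}, -\Ad_{A^{-1}}a)$ together with the homomorphism property $\Ad_A\Ad_B = \Ad_{AB}$, a short bookkeeping computation through the two semi-direct products gives $C_{(A,a)}(B,b) = \big(ABA^{-1},\, a + \Ad_A b - \Ad_{ABA^{-1}} a\big)$. Differentiating the first component along the curve produces $A w_1 A^{-1} = \Ad_A w_1$ at once.

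The one step requiring genuine care is the derivative of the second component, and in particular of the term $\Ad_{AB(t)A^{-1}} a$. Writing $M(t) = A B(t) A^{-1}$, we have $M(0) = I$ and $\dot M(0) = \Ad_A w_1$; since $\grpG$ is a matrix group, $\Ad_{M(t)} a = M(t)\, a\, M(t)^{-1}$ may be differentiated directly, and using $\tfrac{\td}{\td t} M^{-1} = -M^{-1}\dot M M^{-1}$ one obtains $[\dot M(0), a] = [\Ad_A w_1, a]$. The remaining term $\Ad_A b(t)$ differentiates to $\Ad_A w_2$, so the second component of the velocity is $\Ad_A w_2 - [\Ad_A w_1, a]$, which is exactly \eqref{eq:Adjoint}. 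The main obstacle to flag is not analytical difficulty but bookkeeping: because $\grpG^\ltimes_\gothg$ is not itself a matrix group, the conjugation and its derivative must be propagated through the semi-direct-product structure rather than computed as a naive matrix conjugation; once $C_{(A,a)}$ is in closed form, this difficulty disappears.
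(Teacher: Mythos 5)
Your proof is correct and follows essentially the same route as the paper: the left-differential computation is identical, and for the Adjoint the paper simply composes $\td R_{(A,a)^{-1}}$ with the already-computed $\td L_{(A,a)}$ rather than writing the conjugation $C_{(A,a)}(B,b)=(ABA^{-1},\,a+\Ad_A b-\Ad_{ABA^{-1}}a)$ in closed form first, but by the chain rule these are the same calculation and both reduce to differentiating $\Ad_{M(t)}a = M(t)\,a\,M(t)^{-1}$ to produce the bracket term $-[\Ad_A w_1,a]$. Your closed-form conjugation is verified correct, and your remark that the semi-direct product is not a matrix group (so the conjugation must be propagated through the product structure) matches the caveat the paper itself makes before the lemma.
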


\begin{proof}
Computing the differential of the left translation
\begin{align*}
\tD_{(B,b)}\Big|_{(I,0)} &((A,a)\cdot(B,b))[(w_1, w_2)] \\
&= \tD_{(B,b)}\Big|_{(I,0)} (AB, a + \Ad_A b)[(w_1, w_2)] \\
&= (A w_1, \Ad_A w_2).
\end{align*}
This proves \eqref{eq:dL_A}.

In order to demonstrate the result for the Adjoint it is necessary to compute
$\left. \td R_{(A,a)^{-1}} \right|_{(A,a)} [(A w_1, \Ad_A w_2)]$.
One has
\begin{align*}
&\left. \td R_{(A,a)^{-1}} \right|_{(A,a)} [(A w_1, \Ad_A w_2)]
 \\
&\hspace{0.0cm}= \left. \tD_{(B,b)} \left( (B,b) \cdot (A^{-1}, -\Ad_{A^{1}} a) \right) \right|_{(A,a)} [(A w_1, \Ad_A w_2)] \\
&\hspace{0.0cm}= \big( A w_1A^{-1}, \Ad_A w_2 \\
& \hspace{0.4cm}- A w_1 (\Ad_{A^{-1}} a) A^{-1} + A (\Ad_{A^{-1}} a) A^{-1} A w_1 A^{-1} \big) \\
&\hspace{0.0cm}= \left( \Ad_A w_1, \Ad_A w_2 - [ \Ad_A w_1,  a]\right).
\end{align*}
\end{proof}

%%%%%%%%%%%%%%%%%%%%%%%%%%%%%%%%%%%%%%%%%%%%%%%%%%%%%%%%%%%%%%%%%%%%
\section{Problem Formulation}
\label{sec:problem}
The state $\xi$ of second order kinematics of a system on a Lie group $\grpG$ are naturally posed on the tangent bundle $\tT \grpG$.
By using the standard left trivialisation of the velocity state, a tangent vector $P_\xi V_\xi  \in \tT_{P_\xi} \grpG$ is identified with $V_\xi \in \gothg$.
In this manner, the state of our system is posed as an element of the product manifold $\xi = (P_\xi, V_\xi) \in \grpG \times \gothg$ corresponding to the element $ (P_\xi, P_\xi V_\xi) \in \tT \grpG$.
We will abuse notation and identify $\tT\grpG  \equiv \grpG \times \gothg$ in the sequel and write $\xi = (P_\xi, V_\xi)  \in \tT \grpG$.

The first order kinematics of $P_\xi$ are given by $\dot{P}_\xi = P_\xi V_\xi$ for $V_\xi \in \gothg$ the Lie-algebra of $\grpG$.
Since $V_\xi \in \gothg$ lies in a linear vector space, the second order kinematics can be written
\begin{subequations}
\label{eq:classical_kin}
\begin{align}
\dot{P}_\xi &= P_\xi V_\xi, \\
\dot{V}_\xi &= W,
\end{align}
\end{subequations}
where $W \in \tT \gothg \equiv \gothg$ is the generalised acceleration of the system that we will assume is measured.

A key innovation of this paper is to model the full double tangent bundle $\tT (\tT \grpG)$ as input to the second order kinematics.
The double tangent is identified with the tangent of the direct product $\tT (\tT \grpG) \equiv \tT (\grpG \times \gothg)$ using identification discussed above.
In turn we identify $\tT \grpG \times \tT \gothg \equiv \gothg \times \gothg$ with a double copy of the Lie-algebra using left trivialisation in the first part of the direct product and the identification $\tT \gothg \triangleq \gothg$ since $\gothg$ is a linear space.
Thus, inputs are modeled as elements of a vector space
\begin{align}
\vecV = \gothg \times \tT \gothg = \gothg \times \gothg ,
\label{eq:input_space}
\end{align}
where we preserve the $\tT \gothg$ notation to make clear the part of the velocity space that is modelling inputs to the second order part of the kinematics.
An element $U = (U_1 , U_2) \in \vecV$ in the input space can be thought of as two independent elements $U_1 \in \gothg$ and $U_2 \in \tT \gothg \equiv \gothg$.
Note that $V_\xi \in \gothg$ is part of the state $\xi = (P_\xi,V_\xi) \in \tT\grpG $ and not part of the input space.
For the kinematics \eqref{eq:classical_kin} the only measured velocity is $W \in \tT \gothg$ and hence the input velocity that we will use in the observer implementation is $(U_1 , U_2) = (0,W) \in \vecV $.
However, modelling $U_1$ as an independent input, even though eventually it will be set to zero, is critical for the development of the equivariance properties of second order systems.
Thus, the system kinematics we will consider are given by
\begin{subequations}
\label{eq:full_kinematics}
\begin{align}
    \dot{P_\xi} &= P_\xi(V_\xi + U_1), \\
    \dot{V}_\xi &= U_2. \label{eq:dotV}
\end{align}
\end{subequations}

The configuration output that we model is a smooth function $h : \grpG  \rightarrow \calN$ into a smooth manifold $\calN$.
For simplicity, in the present paper, we assume that the full configuration state $P_\xi \in \grpG$ is measured.
That is $h : \grpG \to \grpG$ is the identity map, and
\begin{align}
\label{eq:output}
y = h(P_\xi) = P_\xi.
\end{align}

\begin{remark}
In general, the output function can be used to model measurements from the full state space $h : \tT \grpG \to \calN$ where $\calN$ is a vector space bundle that includes a linear part modeling the velocity measurements.
We postpone a full development of this case to future work.
\end{remark}

We will use the compact notation
\begin{equation} \label{eq:system}
\dot{\xi} = f(\xi, U) = (P_\xi(V_\xi + U_1), U_2).
\end{equation}
for $\xi = (P_\xi, V_\xi)$ and $U = (U_1,U_2)$.
Note that the kinematics \eqref{eq:system} contain a drift term
\[
f((P_\xi, V_\xi), 0) = (P_\xi V_\xi, 0)
\]
for $U \equiv 0$ similar to the drift considered in the authors earlier work on second order attitude kinematics \cite{Ng19}.
This is characteristic of all second order kinematic systems.

%%%%%%%%%%%%%%%%%%%%%%%%%%%%%%%%%%%%%%%%%%%%%%%%%%%%%%%%%%%%%%%%%%%%
\section{Symmetry Of System}
\label{sec:symmetry}
Physical systems usually have physical models with symmetries that encode the equivariance of the laws of motion. When viewed through a symmetric transformation of space, the behaviour of the system at one point is the same as the behaviour at another point in the state space. A system with symmetry allows a global analysis of an observer by analysing the behaviour at one point in space~\cite{Mahony13,2020_mahony_EquivariantSystemsTheory,Ng19}.

\subsection{State space symmetry}
Right multiplication defines a group action of $\grpG$ on $\grpG$ in a natural manner.
Since this action will play a fundamental role as the symmetry operator in our theory, we introduce specific notation $\Phi: \grpG \times \grpG \to \grpG$ by
\begin{align*}
\Phi(A, P) = P A
\end{align*}
for the group action.

\begin{lemma} \label{lem:phi}
Define $\phi: \grpG^\ltimes_\gothg \times \tT\grpG  \to \tT\grpG $ by
\begin{align}
\label{eq:phi_func}
    \phi((A,a),(P_\xi,V_\xi)) = (P_\xi A, \Ad_{A^{-1}} (V_\xi-a)).
\end{align}
The map $\phi$ is a transitive right group action of $\grpG^\ltimes_\gothg$ on $\tT\grpG$.
\end{lemma}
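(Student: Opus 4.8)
**The plan is to verify the three defining properties of a transitive right group action directly from the formula, leveraging the semidirect-product multiplication.**

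To show $\phi$ is a right group action, I must check two algebraic identities and then transitivity. First I would verify the identity axiom: evaluating $\phi$ at the group identity $(I,0)$ gives $\phi((I,0),(P_\xi,V_\xi)) = (P_\xi I, \Ad_{I}(V_\xi - 0)) = (P_\xi, V_\xi)$, since $\Ad_I$ is the identity map on $\gothg$. This is immediate.

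The substantive step is the compatibility (right-action) condition
\begin{align*}
\phi\big((A,a)\cdot(B,b),\,(P_\xi,V_\xi)\big) = \phi\big((B,b),\,\phi((A,a),(P_\xi,V_\xi))\big).
\end{align*}
For a right action I expect the product $(A,a)\cdot(B,b)$ to act by first applying $(A,a)$ and then $(B,b)$, so I would expand both sides and match. On the left, using the semidirect multiplication $(A,a)\cdot(B,b) = (AB, a + \Ad_A b)$, the first component becomes $P_\xi AB$ and the velocity component becomes $\Ad_{(AB)^{-1}}(V_\xi - a - \Ad_A b)$. On the right, applying $(A,a)$ first yields $(P_\xi A, \Ad_{A^{-1}}(V_\xi - a))$, and then applying $(B,b)$ gives first component $P_\xi A B$ and velocity component $\Ad_{B^{-1}}\big(\Ad_{A^{-1}}(V_\xi - a) - b\big)$. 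Matching first components is trivial. For the velocity components I would use the functoriality $\Ad_{B^{-1}}\Ad_{A^{-1}} = \Ad_{(AB)^{-1}}$ and linearity of the Adjoint to rewrite the right-hand velocity as $\Ad_{(AB)^{-1}}(V_\xi - a) - \Ad_{B^{-1}} b$, then confirm this equals $\Ad_{(AB)^{-1}}(V_\xi - a) - \Ad_{(AB)^{-1}}\Ad_A b$; the two agree precisely because $\Ad_{(AB)^{-1}}\Ad_A = \Ad_{B^{-1}}$. The bookkeeping with the inverse and the placement of $\Ad_A b$ inside the argument is the one spot where a sign or ordering slip could occur, so this is where I would take the most care.

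For transitivity, given any two states $(P_1,V_1)$ and $(P_2,V_2)$ in $\tT\grpG$, I would exhibit an explicit group element carrying one to the other. Setting $A := P_1^{-1}P_2$ sends $P_1$ to $P_1 A = P_2$; then I must choose $a \in \gothg$ so that $\Ad_{A^{-1}}(V_1 - a) = V_2$, which rearranges to $a = V_1 - \Ad_A V_2$. Since $\grpG$ acts on $\gothg$ via $\Ad$ and $\gothg$ is a vector space, this $a$ exists and lies in $\gothg$, so $(A,a) \in \grpG^\ltimes_\gothg$ maps $(P_1,V_1)$ to $(P_2,V_2)$, establishing transitivity. The main obstacle throughout is simply keeping the Adjoint composition and the semidirect twist aligned; once the identity $\Ad_{(AB)^{-1}}\Ad_A = \Ad_{B^{-1}}$ is used correctly, the verification closes.
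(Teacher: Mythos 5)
Your proposal is correct and follows essentially the same route as the paper: direct verification of the identity and right-compatibility axioms using the semidirect-product multiplication and the composition rule $\Ad_{B^{-1}}\Ad_{A^{-1}}=\Ad_{(AB)^{-1}}$, followed by exhibiting the explicit transitive element $\bigl(P_1^{-1}P_2,\; V_1-\Ad_{P_1^{-1}P_2}V_2\bigr)$, which is exactly the element the paper uses. The only difference is cosmetic (you expand the product side first and the labels $(A,a)$, $(B,b)$ are interchanged relative to the paper's statement of the right-action property), so nothing further is needed.
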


\begin{proof}
Let $(A,a), (B,b) \in \grpG^\ltimes_\gothg$. Then,
\begin{align*}
    \phi((A,a), &\phi((B,b),(P_\xi,V_\xi))) \\
    &= (P_\xi BA, \Ad_{A^{-1}B^{-1}}(V_\xi-b - \Ad_{B}a))\\
    &= (P_\xi(BA), \Ad_{(BA)^{-1}}(V_\xi - (b + \Ad_{B}a)))\\
    &= \phi((B,b)\cdot(A,a),(P_\xi,V_\xi)).
\end{align*}

This demonstrates that the (right handed) group action property holds.
It is straightforward to verify that $\phi((\Id_4, 0), (P_\xi,V_\xi)) = (P_\xi,V_\xi)$.

To see that $\phi$ is transitive, let $(P_\xi,V_\xi)$ and $(P_\xi',V_\xi')$ be any elements of $\tT\grpG $. Then we can find the group element $({P_\xi}^{-1} P_\xi', V_\xi - \Ad_{{P_\xi}^{-1}P_\xi'} V_\xi') \in \grpG^\ltimes_\gothg$ such that
\begin{align*}
    \phi((&{P_\xi}^{-1} P_\xi', V_\xi - \Ad_{{P_\xi}^{-1}P_\xi'} V_\xi'), (P_\xi,V_\xi)) \\
    &= (P_\xi {P_\xi}^{-1} P_\xi', \Ad_{{P_\xi'}^{-1}P_\xi}(V_\xi - (V_\xi - \Ad_{{P_\xi}^{-1}P_\xi'} V_\xi'))) \\
    &= (P_\xi',V_\xi').
\end{align*}
\end{proof}

%%%%%%%%%%%%%%%%%%%%%%%%%%%%%%%%%%%%%%%%%%%%%%%%%%%%%%%%%%%%%%%%%%%%%%%
\subsection{Equivariance of second order kinematics}
%%%%%%%%%%%%%%%%%%%%%%%%%%%%%%%%%%%%%%%%%%%%%%%%%%%%%%%%%%%%%%%%%%%%%%%
In this section we show that the kinematics \eqref{eq:system} are equivariant \cite{Mahony13,2020_mahony_EquivariantSystemsTheory}.
To begin, we need to introduce a group action on the velocity input space.

\begin{lemma} \label{lem:psi}
Define a map $\psi : \grpG^\ltimes_\gothg \times \vecV \to \vecV$ by
\begin{align} \label{eq:psi_func}
    \psi((A,a),(U_1, U_2)) := (\Ad_{A^{-1}}(U_1+a), \Ad_{A^{-1}} U_2).
\end{align}
The map $\psi$  is a right action of $\grpG^\ltimes_\gothg$ on $\vecV$.
\end{lemma}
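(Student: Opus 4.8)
The plan is to verify directly that $\psi$ satisfies the two defining axioms of a right group action: the identity axiom $\psi((I,0),(U_1,U_2)) = (U_1,U_2)$, and the compatibility axiom $\psi((A,a),\psi((B,b),(U_1,U_2))) = \psi((B,b)\cdot(A,a),(U_1,U_2))$. Both reduce to routine algebra in $\gothg$ using the semidirect product multiplication $(B,b)\cdot(A,a) = (BA, b + \Ad_B a)$ from the group definition and the composition rule $\Ad_{A^{-1}}\Ad_{B^{-1}} = \Ad_{A^{-1}B^{-1}} = \Ad_{(BA)^{-1}}$ for the adjoint representation.

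First I would dispatch the identity axiom, which is immediate: setting $(A,a) = (I,0)$ gives $\psi((I,0),(U_1,U_2)) = (\Ad_{I}(U_1+0), \Ad_I U_2) = (U_1, U_2)$, since $\Ad_I$ is the identity map on $\gothg$. Next I would treat the compatibility axiom, handling the two components of $\vecV$ separately. For the second component the check is clean: the inner action gives $\Ad_{B^{-1}}U_2$, and applying the outer action yields $\Ad_{A^{-1}}\Ad_{B^{-1}}U_2 = \Ad_{(BA)^{-1}}U_2$, which matches the second component of $\psi((BA, b+\Ad_B a),(U_1,U_2))$ since that component depends only on the group element through $\Ad_{(BA)^{-1}}$.

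The first component is where the real content lives and is the one step I would flag as needing care. Applying $\psi((B,b),-)$ first replaces $U_1$ by $\Ad_{B^{-1}}(U_1+b)$; then applying $\psi((A,a),-)$ produces
\[
\Ad_{A^{-1}}\big(\Ad_{B^{-1}}(U_1+b) + a\big) = \Ad_{A^{-1}}\Ad_{B^{-1}}(U_1+b) + \Ad_{A^{-1}} a .
\]
Using $\Ad_{A^{-1}}\Ad_{B^{-1}} = \Ad_{(BA)^{-1}}$ and the linearity of $\Ad$, I would then rewrite the right-hand side as $\Ad_{(BA)^{-1}}\big(U_1 + b + \Ad_B a\big)$, observing that $\Ad_{(BA)^{-1}}\Ad_B a = \Ad_{A^{-1}} a$ is exactly what re-absorbs the stray $\Ad_{A^{-1}}a$ term. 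This is precisely $\Ad_{(BA)^{-1}}(U_1 + (b+\Ad_B a))$, matching the first component of $\psi((BA, b+\Ad_B a),(U_1,U_2))$.

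The main obstacle is bookkeeping rather than conceptual: one must be careful that the translation parameter $a$ enters only the first component (as the additive shift before the adjoint), and that the cocycle term $\Ad_B a$ appearing in the group product is exactly what is needed to reconcile the two orders of action. I expect no genuine difficulty beyond tracking the adjoint composition correctly, so the proof amounts to assembling these two component-wise verifications.
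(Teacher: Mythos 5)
Your proposal is correct and follows essentially the same route as the paper: a direct component-wise verification of the identity and compatibility axioms, using the semidirect product rule $(B,b)\cdot(A,a)=(BA,\,b+\Ad_B a)$ and the composition $\Ad_{A^{-1}}\Ad_{B^{-1}}=\Ad_{(BA)^{-1}}$ to re-absorb the $\Ad_{A^{-1}}a$ term. The only difference is that you spell out the key rearrangement in the first component more explicitly than the paper does.
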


\begin{proof}
Let $(A,a),(B,b) \in \grpG^\ltimes_\gothg$ and let $(U_1, U_2) \in \vecV$. Then we have
\begin{align*}
    \psi((A,a),&\psi((B,b),(U_1, U_2))) \\
    &= (\Ad_{A^{-1}}(\Ad_{B^{-1}}(U_1+b) + a), \Ad_{A^{-1}}\Ad_{B^{-1}} U_2) \\
    &= (\Ad_{(BA)^{-1}}(U_1+ (b + \Ad_{B} a)), \Ad_{(BA)^{-1}} U_2) \\
    &= \psi((B,b) \cdot (A,a),(U_1, U_2)).
\end{align*}
This shows that the (right handed) group action property holds. It is straightforward to verify that $\psi((\Id_4,0),(U_1, U_2) = (U_1, U_2)$. Thus, $\psi$ is indeed a right action.
\end{proof}

\begin{lemma} \label{lem:equivariant_input}
The system defined in \eqref{eq:system} is equivariant with respect to the group $\grpG^\ltimes_\gothg$ and group actions $\phi$ \eqref{eq:phi_func} and $\psi$ \eqref{eq:psi_func}; that is,
\begin{align*}
    \td \phi_X [f(\xi, U)] = f(\phi_X(\xi), \psi_X(U))
\end{align*}
for any $X \in \grpG^\ltimes_\gothg$, $\xi \in \tT\grpG $, and $v \in \vecV$.
\end{lemma}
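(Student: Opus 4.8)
The plan is to fix arbitrary $X = (A,a) \in \grpG^\ltimes_\gothg$, $\xi = (P_\xi, V_\xi) \in \tT\grpG$ and $U = (U_1, U_2) \in \vecV$, compute both sides of the equivariance identity explicitly, and compare.

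First I would evaluate the right-hand side $f(\phi_X(\xi), \psi_X(U))$. Substituting the definitions \eqref{eq:phi_func} and \eqref{eq:psi_func}, the transformed state is $(P_\xi A, \Ad_{A^{-1}}(V_\xi - a))$ and the transformed input is $(\Ad_{A^{-1}}(U_1 + a), \Ad_{A^{-1}} U_2)$. Applying $f$ from \eqref{eq:system}, the first-order component is $P_\xi A\big(\Ad_{A^{-1}}(V_\xi - a) + \Ad_{A^{-1}}(U_1 + a)\big)$. The key observation is that the affine translations by $-a$ and $+a$, introduced respectively by $\phi$ and $\psi$, cancel, so this collapses to $P_\xi A\,\Ad_{A^{-1}}(V_\xi + U_1) = P_\xi(V_\xi + U_1)A$ via $\Ad_{A^{-1}} Y = A^{-1} Y A$. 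The second component is simply $\Ad_{A^{-1}} U_2$, yielding $f(\phi_X(\xi), \psi_X(U)) = \big(P_\xi(V_\xi + U_1)A,\, \Ad_{A^{-1}} U_2\big)$.

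Next I would compute the left-hand side $\td\phi_X[f(\xi, U)]$. Here $f(\xi, U) = (P_\xi(V_\xi + U_1), U_2)$ is a tangent vector to $\tT\grpG$ at $\xi$, which I realise as $\dot\gamma(0)$ for a curve $\gamma(t) = (P(t), V(t))$ with $\gamma(0) = \xi$, $\dot P(0) = P_\xi(V_\xi + U_1)$ and $\dot V(0) = U_2$. With $X$ held fixed, $\phi_X$ maps $(P,V) \mapsto (PA, \Ad_{A^{-1}}(V - a))$, which is block-diagonal: right translation $P \mapsto PA$ on the group factor and an affine map on the flat algebra factor. Differentiating $\phi_X(\gamma(t))$ at $t = 0$, the differential of right translation sends $\dot P(0)$ to $\dot P(0) A$, while the linear part of the affine map is the constant operator $\Ad_{A^{-1}}$, so $\td\phi_X[f(\xi, U)] = \big(P_\xi(V_\xi + U_1)A,\, \Ad_{A^{-1}} U_2\big)$, matching the right-hand side.

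The computations are essentially routine, and the one step requiring care is the evaluation of the tangent lift $\td\phi_X$ on the velocity components: the first entry must be treated as a genuine (non-left-trivialised) tangent vector pushed forward by right multiplication, and the second as living in the flat factor $\gothg$, on which the differential of the affine action reduces to its linear part $\Ad_{A^{-1}}$. The cancellation of the translation terms in the first-order kinematics is exactly what makes the virtual input $U_1$, together with its $\psi$-action, the right construction for rendering this drift-containing second order system equivariant.
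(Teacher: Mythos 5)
Your proposal is correct and follows essentially the same route as the paper: both sides reduce, by direct computation, to $\bigl(P_\xi(V_\xi+U_1)A,\ \Ad_{A^{-1}}U_2\bigr)$, with the cancellation of the $\pm a$ terms being the key algebraic step. The only cosmetic difference is that you evaluate the right-hand side first and justify the tangent map $\td\phi_X$ via an explicit curve, whereas the paper computes the left-hand side first and applies the differential directly.
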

\begin{proof}
Let $(A,a) \in \grpG^\ltimes_\gothg$, $(P_\xi, V_\xi) \in \tT\grpG $, and $(U_1, U_2) \in \vecV$ be arbitrary.
Then we have
\begin{align*}
    \td \phi&_{(A,a)}  f((P_\xi, V_\xi), (U_1,U_2)) \\
    &= \tD_{(P_\xi,V_\xi)} \phi_{(A,a)} (P_\xi,V_\xi) [(P_\xi(V_\xi+U_1),U_2)] \\
    &= (P_\xi(V_\xi+U_1)A, \Ad_{A^{-1}}U_2)\\
    &= (P_\xi A(\Ad_{A^{-1}}(V_\xi-a) + \Ad_{A^{-1}}(U_1+a)), \Ad_{A^{-1}} U_2) \\
    &= f( \phi_{(A,a)} (P_\xi,V_\xi), \psi_{(A,a)} (U_1,U_2) ),
\end{align*}
which proves the equivariance condition.
\end{proof}

\begin{remark}
This result cannot be shown without exploiting the velocity input $U_1$.
That is, in order to demonstrate equivariance of second order kinematics, it is necessary to model a first order velocity input separate to the velocity state of the system.
The authors believe that this insight is a key contribution of the paper.
\end{remark}

%%%%%%%%%%%%%%%%%%%%%%%%%%%%%%%%%%%%%%%%%%%%%%%%%%%%%%%%%%%%%
\section{System Lift}
\label{sec:lift}
%%%%%%%%%%%%%%%%%%%%%%%%%%%%%%%%%%%%%%%%%%%%%%%%%%%%%%%%%%%%%
\subsection{System Lift onto the Group}
The proposed observer state is posed on the symmetry group \cite{Mahony13,2020_mahony_EquivariantSystemsTheory}.
This allows the state to be defined with respect to an arbitrarily chosen \emph{origin point} while ensuring the same system kinematics is still satisfied.
In order to build the observer it is necessary to find a lift, $\Lambda : \tT \grpG \times \vecV \to \gothg^\ltimes_\gothg$, of the system kinematics into the Lie-algebra $\gothg^\ltimes_\gothg$ of the group $\grpG^\ltimes_\gothg$.
A function $\Lambda : \tT \grpG \times \vecV \to \gothg^\ltimes_\gothg$ is a lift \cite{2020_mahony_EquivariantSystemsTheory} if
\begin{align}
\td\phi_{\xi} [\Lambda(\xi, U)] = f(\xi, U)
\label{eq:lift_condition}
\end{align}
for all $A \in \grpG$, $\xi \in \tT \grpG$ and $U \in \vecV$.

\begin{lemma} \label{lem:system_lift}
Define a function $\Lambda: \tT\grpG  \times \vecV \to \gothg^\ltimes_\gothg$ by
\begin{align}
    \Lambda((P_\xi, V_\xi), (U_1, U_2)) := ( V_\xi + U_1, [ V_\xi, U_1 ] - U_2),
    \label{eq:system_lift}
\end{align}
where the square bracket represents the Lie-bracket on $\gothg$ given by the  matrix commutator for a matrix Lie-group.
The function $\Lambda$ (\ref{eq:system_lift}) is a lift function for \eqref{eq:system} with respect to the symmetry group $\grpG^\ltimes_\gothg$.
\end{lemma}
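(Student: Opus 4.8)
The plan is to verify the defining property \eqref{eq:lift_condition} of a lift directly, by first obtaining a closed form for the differential of the partial action map and then substituting the candidate $\Lambda$. Fix the state $\xi = (P_\xi, V_\xi)$ and consider the map $\phi_\xi : \grpG^\ltimes_\gothg \to \tT\grpG$, $X \mapsto \phi(X, \xi)$, obtained from \eqref{eq:phi_func} by freezing the second argument. The lift condition \eqref{eq:lift_condition} asks precisely that pushing $\Lambda(\xi, U) \in \gothg^\ltimes_\gothg$ through the differential $\td\phi_\xi$ at the group identity $(\Id,0)$ reproduces the vector field $f(\xi, U)$ of \eqref{eq:system}. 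So the first and main computation is a general formula for $\td\phi_\xi|_{(\Id,0)}[(w_1,w_2)]$ on an arbitrary algebra element $(w_1,w_2) \in \gothg^\ltimes_\gothg$.

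To carry this out I would differentiate along a curve $(A(t), a(t))$ in $\grpG^\ltimes_\gothg$ through the identity with $(\dot A(0), \dot a(0)) = (w_1, w_2)$, for instance $A(t) = \exp(t w_1)$ and $a(t) = t w_2$. The first component $t \mapsto P_\xi A(t)$ differentiates to $P_\xi w_1 \in \tT_{P_\xi}\grpG$, the right-translated copy of $w_1$. For the second component I would write the Adjoint as a matrix conjugation, $\Ad_{A(t)^{-1}}(V_\xi - a(t)) = A(t)^{-1}(V_\xi - a(t))A(t)$, and apply the product rule using that the derivative of $A(t)^{-1}$ at $t=0$ equals $-w_1$; the three resulting terms collapse to $[V_\xi, w_1] - w_2$, where $[\cdot,\cdot]$ is the matrix commutator on $\gothg$. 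This yields
\[
\td\phi_\xi|_{(\Id,0)}[(w_1,w_2)] = \big( P_\xi w_1,\; [V_\xi, w_1] - w_2 \big).
\]

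With this formula in hand, the remaining step is purely algebraic: substitute $(w_1, w_2) = \Lambda(\xi, U) = (V_\xi + U_1,\, [V_\xi, U_1] - U_2)$ from \eqref{eq:system_lift}. The first component becomes $P_\xi(V_\xi + U_1)$, which is exactly the first component of $f(\xi,U)$. For the second component one computes $[V_\xi,\, V_\xi + U_1] - ([V_\xi, U_1] - U_2)$; here the key cancellation is $[V_\xi, V_\xi] = 0$, so the expression reduces to $[V_\xi,U_1] - [V_\xi,U_1] + U_2 = U_2$, matching the second component of $f(\xi, U)$. This establishes \eqref{eq:lift_condition} and proves that $\Lambda$ is a lift.

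I expect the only delicate point to be the differentiation of the Adjoint term in the second slot: the sign coming from the derivative of $A(t)^{-1}$ and the orientation of the resulting commutator are easy to get wrong, and one must also keep straight the two distinct tangent-space identifications (the honest tangent vector $P_\xi w_1 \in \tT_{P_\xi}\grpG$ in the first slot versus the linear identification $\tT\gothg \equiv \gothg$ in the second). Everything after the closed form for $\td\phi_\xi$ is routine, hinging on the antisymmetry $[V_\xi,V_\xi]=0$ that the definition of $\Lambda$ is evidently engineered to exploit.
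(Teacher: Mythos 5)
Your proposal is correct and follows essentially the same route as the paper: compute the differential of the partial action map $\phi_\xi$ at the group identity to obtain $\td\phi_\xi|_{(\Id,0)}[(w_1,w_2)] = (P_\xi w_1, [V_\xi, w_1] - w_2)$, then substitute $\Lambda(\xi,U)$ and use $[V_\xi,V_\xi]=0$ to recover $f(\xi,U)$. The paper's proof is exactly this calculation written out in one chain of equalities, so there is nothing further to add.
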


\begin{proof}
Let $(P_{\xi}, V_{\xi}) \in \tT\grpG $, $(U_1, U_2) \in \vecV$, and $(A,a) \in \grpG^\ltimes_\gothg$ be arbitrary.
Compute
\begin{align*}
    \td&\phi_{(P_\xi, V_\xi)} [\Lambda((P_\xi, V_\xi), (U_1, U_2))] \\
    &= \tD_{(A,a)}\Big|_{I,0} (P_\xi A, \Ad_{A^{-1}}(V_\xi - a))[\Lambda((P_\xi, V_\xi), (U_1, U_2))] \\
    &= (P_\xi (V_\xi + U_1), -\Big[ V_\xi + U_1, V_\xi \Big] - (\Big[ V_\xi, U_1 \Big] - U_2)) \\
    &= (P_\xi (V_\xi + U_1), U_2) \\
    &= f((P_\xi, V_\xi), (U_1, U_2)),
\end{align*}
which verifies \eqref{eq:lift_condition}.
\end{proof}

\begin{lemma} \label{lem:equivariant_lift}
The lift defined in \eqref{eq:system_lift} is equivariant with respect to the group $\grpG^\ltimes_\gothg$ and group actions $\phi$ \eqref{eq:phi_func} and $\psi$ \eqref{eq:psi_func} \cite{2020_mahony_EquivariantSystemsTheory}.  That is
\begin{align*}
    \Ad_{(A,a)} \Lambda(\xi, U) = \Lambda(\phi_{(A,a)^{-1}}(\xi), \psi_{(A,a)^{-1}}(U)),
\end{align*}
for any $(A,a) \in \grpG^\ltimes_\gothg$, $\xi \in \tT\grpG $, and $U \in \vecV$.
\end{lemma}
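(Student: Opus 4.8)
The plan is to verify this algebraic identity by computing both sides explicitly and matching them component by component; no geometric subtlety is involved beyond careful bookkeeping with the Adjoint and the bracket. First I would simplify the arguments appearing on the right-hand side. Using the inverse formula $(A,a)^{-1} = (A^{-1}, -\Ad_{A^{-1}} a)$ together with the definitions \eqref{eq:phi_func} and \eqref{eq:psi_func}, and repeatedly invoking the homomorphism property $\Ad_A \Ad_{A^{-1}} = \id$, I would obtain
\[
\phi_{(A,a)^{-1}}(P_\xi, V_\xi) = (P_\xi A^{-1},\, \Ad_A V_\xi + a),
\qquad
\psi_{(A,a)^{-1}}(U_1, U_2) = (\Ad_A U_1 - a,\, \Ad_A U_2).
\]

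Next I would substitute these into the lift \eqref{eq:system_lift}. The first component is $(\Ad_A V_\xi + a) + (\Ad_A U_1 - a) = \Ad_A(V_\xi + U_1)$, which matches the first component of $\Ad_{(A,a)}\Lambda(\xi,U)$ read off from \eqref{eq:Adjoint} immediately. The work is all in the second component, which requires expanding $[\Ad_A V_\xi + a,\, \Ad_A U_1 - a]$. Using bilinearity, the homomorphism identity $\Ad_A[V_\xi, U_1] = [\Ad_A V_\xi, \Ad_A U_1]$, and antisymmetry (so that $[a, \Ad_A U_1] = -[\Ad_A U_1, a]$ and $[a,a] = 0$), this reduces to
\[
\Ad_A[V_\xi, U_1] - [\Ad_A V_\xi, a] - [\Ad_A U_1, a] - \Ad_A U_2 .
\]

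Finally I would compute the left-hand side directly from \eqref{eq:Adjoint} with $w_1 = V_\xi + U_1$ and $w_2 = [V_\xi, U_1] - U_2$, whose second component is $\Ad_A([V_\xi, U_1] - U_2) - [\Ad_A(V_\xi + U_1), a]$; expanding $[\Ad_A(V_\xi + U_1), a] = [\Ad_A V_\xi, a] + [\Ad_A U_1, a]$ gives exactly the expression above, completing the proof. The main obstacle is purely this bracket expansion in the second component: the two cross terms in $a$ must combine correctly, and this is precisely where the commutator term $[V_\xi, U_1]$ in the lift and the correction term $-[\Ad_A w_1, a]$ in the Adjoint conspire to agree. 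Verifying that every sign lines up is the real content of the lemma, so this is the step where I would be most careful.
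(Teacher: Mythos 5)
Your proof is correct and follows essentially the same route as the paper: compute $\phi_{(A,a)^{-1}}$ and $\psi_{(A,a)^{-1}}$ using the inverse formula, substitute into $\Lambda$, and match against $\Ad_{(A,a)}\Lambda(\xi,U)$ via the Adjoint formula \eqref{eq:Adjoint}. You merely spell out the bracket expansion in the second component more explicitly than the paper does, and all the signs check out.
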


\begin{proof}
Let $X = (A,a) \in \grpG^\ltimes_\gothg$.
Recalling Lemma~\ref{lem:dL}, direct computation yields
\begin{align*}
& \Ad_{(A,a)} \Lambda(\xi, U) \\
&= \Big(\Ad_A (V_\xi + U_1),  \\
& \quad\quad\quad\quad\quad
\Ad_A (\Big[V_\xi, U_1\Big] - U_2) -\Big[\Ad_A (V_\xi + U_1), a\Big]\Big) \\
&= \Lambda\Big( (P_\xi A^{-1}, \Ad_A(V_\xi+\Ad_{A^{-1}}a)), \\
& \quad\quad\quad\quad\quad
(\Ad_A (U_1 - \Ad_{A^{-1}}a), \Ad_A U_2) \Big) \\
&= \Lambda(\phi_{X^{-1}}(\xi), \psi_{X^{-1}}(U)).
\end{align*}
\end{proof}

%%%%%%%%%%%%%%%%%%%%%%%%%%%%%%%%%%%%%%%%%%%%%%%%%%%%%%%%%%%%%%%%%%%%%%%%%%%%%%%
\subsection{Origin Point and Projected Kinematics}

Define the lifted system to be
\begin{align}
    \dot{X} := \td \text{L}_X \Lambda(\xi, U), \label{eq:lifted_kin}
\end{align}
for $\xi \in \tT\grpG $, $U \in \vecV$ and $\td \text{L}_X$ given by Lemma~\ref{lem:dL}.
Let $X = (A, a) \in \grpG^\ltimes_\gothg$ be a solution to the lifted system with initial condition $X(0) \in \grpG^\ltimes_\gothg$ such that $\phi_{\mr{\xi}} (X(0)) = \xi(0)$, the true initial condition of the second order kinematics.

Fix an arbitrary origin $\mr{\xi} = (P_{\mr{\xi}}, V_{\mr{\xi}}) \in \tT\grpG$.
This \emph{origin point} defines a global parametrization $\phi_{\mr{\xi}} : \grpG^\ltimes_\gothg \to \tT\grpG $ of the state space $\tT\grpG $ by the symmetry group $\grpG^\ltimes_\gothg$.
From \cite[Lemma 4.9]{2020_mahony_EquivariantSystemsTheory}, the solution $X(t, X(0))$ for $U = (0, W)$ projects back to the state $\xi(t, \mr{\xi})$ via the group action
\begin{align*}
\phi_{\mr{\xi}}(X (t; X(0))) = \xi(t; \mr{\xi}).
\end{align*}
Writing this in explicit coordinates, one has
\begin{align*}
    (P_\xi, V_\xi) = \phi_{(P_{\mr{\xi}}, V_{\mr{\xi}})} (A,a) = (P_{\mr{\xi}} A, \Ad_{A^{-1}}(V_{\mr{\xi}} - a)), \\
    (\hat{P}_\xi, \hat{V}_\xi) = \phi_{({P_{\mr{\xi}}}, {U}_0)}, (\hat{A},\hat{a}) = ({P_{\mr{\xi}}} \hat{A}, \Ad_{\hat{A}^{-1}}({V}_{\mr{\xi}} - \hat{a})).
\end{align*}
Substituting for (\ref{eq:system_lift}) and (\ref{eq:lifted_kin}), the explicit form of the lifted kinematics on $\grpG^\ltimes_\gothg$ can be written as
\begin{subequations}
\label{eq:lifted_system_kinematics}
\begin{align}
    \dot{A} &= A(\Ad_{A^{-1}}(V_{\mr{\xi}} - a) + U_1), \\
    \dot{a} &= \Ad_{A}\left(\Big[\Ad_{A^{-1}}(V_{\mr{\xi}} - a), U_1 \Big] - U_2\right).
\end{align}
\end{subequations}

It is interesting to note that these lifted kinematics do not resemble the system kinematics \eqref{eq:classical_kin}.
If one chooses a reference $V_{\mr{\xi}} = 0$ and sets $U_1 \equiv 0$ and $U_2 = W$ then one recovers $\dot{A} = -aA$ and $\dot{a} = - \Ad_A W$ which at least contains structure similar to the system kinematics.
It is, however, best not to think of the state of lifted system as the physical variables of the system, rather as transformations that relate a reference to the system variables.
This is particularly important when the observer state model is based on the lifted system and not on the actual system, and thus, the observer state itself cannot be directly related to the system evolution, only its projection through the group action to a state estimate can be compared with the system state.

\section{Observer Design}
\label{sec:observer}
Our proposed observer has its state on the symmetry group $\grpG^\ltimes_\gothg$ and uses the lifted system as its internal model.
We use $\hat{X}(t; \hat{X}(0)) \in \grpG^\ltimes_\gothg$ to denote the estimate for the lifted system state $X(t; X(0))$ for unknown $X(0)$.
The observer is based on a \emph{pre-observer} or internal model (a copy of (\ref{eq:lifted_system_kinematics})) with innovation.
The innovation takes output $y$ and the observer state $\hat{X}$, and generates a correction term for the observer dynamics with the goal that $\hat{\xi} = \phi(\hat{X}, \mr{\xi})$ converges to $\xi(t, \mr{\xi})$.

\subsection{Proposed Observer}
\label{th:observer_lifted}
Let $(P_{\mr{\xi}}, V_{\mr{\xi}})$ be the chosen reference point in $\tT\grpG $.
Let $\hat{X} = (\hat{A}, \hat{a}) \in \grpG^\ltimes_\gothg$, with arbitrary initial condition $\hat{X}(0) = (\hat{A}_0, \hat{a}_0)$.
Define  observer kinematics
\begin{subequations}
\label{eq:lifted_observer_kinematics}
\begin{align}
    \dot{\hat{A}} &= \hat{A}(\Ad_{\hat{A}^{-1}}(V_{\mr{\xi}}-\hat{a}) + U_1) - \Delta_1 \hat{A}, \\
    \dot{\hat{a}} &= \Ad_{\hat{A}}\left(\Big[ \Ad_{\hat{A}^{-1}}(V_{\mr{\xi}} - \hat{a}), U_1 \Big] - U_2\right) - \Delta_2,
\end{align}
\end{subequations}
for arbitrary inputs $(U_1,U_2)$ and innovation terms $(\Delta_1, \Delta_2) \in \gothg \times \gothg$ that remain to be defined.

Let $X = (A, a)$ be the lifted state and define the group error to be
\begin{align}
E = \hat{X} X^{-1} = (\tilde{A}, \tilde{a}).
\label{eq:E}
\end{align}
The explicit form of $(\tilde{A}, \tilde{a})$ is
\begin{align}
\tilde{A} = \hat{A} A^{-1}, \quad\quad\quad\quad
\tilde{a} = \hat{a} - \Ad_{\tilde{A}} a .
\label{eq:tilde_Aa}
\end{align}

Given that the initial condition for the lifted system is chosen to match the true system then $V_\xi=\Ad_{A^{-1}}(V_{\mr{\xi}}-a)$.
Define $\hat{V}_\xi=\Ad_{\hat{A}^{-1}}(V_{\mr{\xi}}-\hat{a}$).
The dynamics of the group error $E = (\tilde{A}, \tilde{a})$ are given by
\begin{subequations}
\label{eq:errorsystem_kinematics}
\begin{align}
\dot{\tilde{A}} &= (\Ad_{\hat{A}}\tilde{V}_\xi - \Delta_1 ) \tilde{A}, \\
\dot{\tilde{a}} &= \Ad_{\hat{A}}\left(\Big[ \tilde{V}_\xi, U_1\Big]\right) +\Big[ \Ad_{\tilde{A}} a, \Ad_{\hat{A}}\tilde{V}_\xi - \Delta_1  \Big] - \Delta_2,
\end{align}
\end{subequations}
with $\tilde{V}_\xi = \hat{V}_\xi - {V}_\xi$.

The goal is to design the innovations $\Delta_1$ and $\Delta_2$ such that the group error $E=(\tilde{A}, \tilde{a})=(I,0)$ is locally asymptotically stable to $(I,0)$.
This will in turn ensure asymptotic convergence of $\hat{\xi} = \phi(\hat{X}, \mr{\xi})$ to $\xi(t, \mr{\xi})$.
It is impossible to guarantee the global asymptotic stability of the error without specific structure for the group $\grpG$.
Indeed, many Lie-groups include topological obstructions to existence of global smooth stabilisation controls and in these cases the error system will always have critical points other than the desired equilibria.
In practice, the natural structure of $\grpG^\ltimes_\gothg$ tends to lead to large (almost global) basins of attraction for the proposed observer.
Nevertheless, in the present paper, the appropriate property is that of local asymptotic stability.

\begin{proposition}
Consider the lifted kinematic system \eqref{eq:lifted_system_kinematics} and consider the observer defined by \eqref{eq:lifted_observer_kinematics} along with the following expressions for the innovation terms
\begin{subequations}
\begin{align}
\label{eq:innovations}
\Delta_1 &= -k_1 \mathbb{P}_\gothg\Big( (I - \tilde{A}) \tilde{A}^\top \Big), \\
\Delta_2 &= \Ad_{\hat{A}} \mathbb{P}_\gothg \Big( \Big[ \hat{V}_\xi, \Ad_{\hat{A}^{-1}}\Delta_1 \Big] + k_2 \hat{A}^\top (I - \tilde{A})\tilde{A}^\top \hat{A}^{-\top} \Big),
\end{align}
\end{subequations}
with $k_1$ and $k_2$ two positive gains and where $(\tilde{A}, \tilde{a})$ is given by \eqref{eq:tilde_Aa}.
If $X =(A,a)$ and $X^{-1} = (A^{-1}, -\Ad_A a)$ are bounded then, the equilibrium  $E=(\tilde{A}, \tilde{a})=(I,0)$ is locally asymptotically stable.
\end{proposition}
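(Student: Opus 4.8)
The plan is to construct a Lyapunov function for the error dynamics \eqref{eq:errorsystem_kinematics} and use it to drive $E=(\tilde{A},\tilde{a})$ to $(I,0)$. The decisive preliminary step is to choose the velocity error in the body frame. Alongside $\tilde{V}_\xi=\hat{V}_\xi-V_\xi$ I would use its transported version $\tilde{W}:=\Ad_{\hat{A}}\tilde{V}_\xi$ and record, using $\tilde{A}=\hat{A}A^{-1}$, $\tilde{a}=\hat{a}-\Ad_{\tilde{A}}a$ together with the initial-condition matching $V_\xi=\Ad_{A^{-1}}(V_{\mr{\xi}}-a)$, the identity $\tilde{W}=(I-\Ad_{\tilde{A}})V_{\mr{\xi}}-\tilde{a}$. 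This presents $\tilde{W}$ (hence $\tilde{V}_\xi$) as a function of the error that vanishes exactly at $E=(I,0)$ and makes $(\tilde{A},\tilde{a})\mapsto(\tilde{A},\tilde{V}_\xi)$ a local diffeomorphism near the equilibrium, so it suffices to drive $(\tilde{A},\tilde{V}_\xi)\to(I,0)$.

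The crux is the body-frame velocity dynamics. Writing $\dot{\tilde{A}}=(\tilde{W}-\Delta_1)\tilde{A}$ and $\hat{S}:=\dot{\hat{A}}\hat{A}^{-1}=\Ad_{\hat{A}}\hat{V}_\xi+\Ad_{\hat{A}}U_1-\Delta_1$, differentiating $\tilde{V}_\xi=\Ad_{\hat{A}^{-1}}\tilde{W}$ gives $\dot{\tilde{V}}_\xi=\Ad_{\hat{A}^{-1}}(\dot{\tilde{W}}-[\hat{S},\tilde{W}])$. Splitting the innovation \eqref{eq:innovations} as $\Delta_2=\Delta_2^{(1)}+\Delta_2^{(2)}$ with $\Delta_2^{(1)}=[\Ad_{\hat{A}}\hat{V}_\xi,\Delta_1]$ and $\Delta_2^{(2)}=k_2\Ad_{\hat{A}}\mathbb{P}_\gothg(\hat{A}^\top N\hat{A}^{-\top})$, where $N:=(I-\tilde{A})\tilde{A}^\top$, the bracket piece $\Delta_2^{(1)}$ cancels the drift-coupling commutator and transport to the body frame cancels every remaining $\ad$-type term (including all $U_1$ commutators). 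I therefore expect the clean relation $\dot{\tilde{V}}_\xi=\Ad_{\hat{A}^{-1}}\Delta_2^{(2)}=k_2\mathbb{P}_\gothg(\hat{A}^\top N\hat{A}^{-\top})$, valid for arbitrary $(U_1,U_2)$.

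With this in hand I would take $\Lyap=\tfrac{k_2}{2}\lVert I-\tilde{A}\rVert^2+\tfrac12\lVert\tilde{V}_\xi\rVert^2$ in the Frobenius norm, which is smooth and locally positive definite with a strict minimum at $E=(I,0)$. Since $\Delta_1=-k_1\mathbb{P}_\gothg(N)$ is exactly $-k_1$ times the projected gradient of the configuration term, differentiating it along \eqref{eq:errorsystem_kinematics} and using $\tilde{W}\in\gothg$ with the defining property of $\mathbb{P}_\gothg$ gives $k_2\tfrac{\mathrm{d}}{\mathrm{d}t}\tfrac12\lVert I-\tilde{A}\rVert^2=-k_2\tr(N^\top\tilde{W})-\tfrac{k_2}{k_1}\lVert\Delta_1\rVert^2$. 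For the velocity term, the relation above together with $\Ad_{\hat{A}}\tilde{V}_\xi=\tilde{W}$ and the self-adjointness of $\mathbb{P}_\gothg$ gives $\tfrac{\mathrm{d}}{\mathrm{d}t}\tfrac12\lVert\tilde{V}_\xi\rVert^2=k_2\tr(N^\top\tilde{W})$. The two cross terms cancel identically, leaving $\dot{\Lyap}=-\tfrac{k_2}{k_1}\lVert\Delta_1\rVert^2\le0$.

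The main obstacle is that this estimate is only negative \emph{semidefinite}: it vanishes on the whole set $\{\Delta_1=0\}$, and because the error dynamics are time-varying through $\hat{A}(t)$ and $V_\xi(t)$ a direct LaSalle argument is unavailable. This is exactly where the hypothesis that $X=(A,a)$ and $X^{-1}$ are bounded is used, since it bounds $\hat{A}$, $\hat{A}^{-1}$, $\hat{V}_\xi$ and hence $\dot{\Delta}_1$. From $\dot{\Lyap}\le0$ one obtains $\int_0^\infty\lVert\Delta_1\rVert^2\,\mathrm{d}t<\infty$ with $\dot{\Delta}_1$ bounded, so the modified Barbalat lemma of the appendix yields $\Delta_1\to0$; near the equilibrium the linearisation $\Delta_1\approx k_1(\tilde{A}-I)$ is invertible, forcing $\tilde{A}\to I$. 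A second Barbalat argument applied to $\dot{\tilde{A}}$ (using boundedness of $\ddot{\tilde{A}}$) then gives $\dot{\tilde{A}}\to0$, whence $\tilde{W}=\dot{\tilde{A}}\tilde{A}^{-1}+\Delta_1\to0$ and $\tilde{V}_\xi=\Ad_{\hat{A}^{-1}}\tilde{W}\to0$. Turning this chain of limits into a rigorous local asymptotic stability statement, under the time-varying drift, is the technical heart of the proof and the step I expect to require the most care.
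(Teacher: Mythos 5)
Your proposal is correct and follows essentially the same route as the paper: the identical Lyapunov function (yours is the paper's scaled by $k_2$), the same cancellation yielding $\dot{\Lyap}=-\tfrac{k_2}{k_1}\|\Delta_1\|_F^2$, Barbalat's lemma for $\tilde{A}\to I$, and a second (modified) Barbalat argument on $\dot{\tilde{A}}$ to recover the velocity/$\tilde{a}$ convergence. The only cosmetic differences are that you make the relation $\dot{\tilde{V}}_\xi=k_2\mathbb{P}_\gothg(\hat{A}^\top(I-\tilde{A})\tilde{A}^\top\hat{A}^{-\top})$ explicit and conclude $\tilde{a}\to 0$ through the identity $\Ad_{\hat{A}}\tilde{V}_\xi=(I-\Ad_{\tilde{A}})V_{\mr{\xi}}-\tilde{a}$, whereas the paper reaches the same conclusion via the decomposition $\dot{\tilde{A}}=-\tilde{a}\tilde{A}+M(t)$.
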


\begin{proof}
Let us consider the following candidate Lyapunov function
\begin{align} \label{eq:lyapunov_definition}
\Lyap = \frac{1}{2} \tr\Big( (I - \tilde{A})(I - \tilde{A})^\top \Big) + \frac{1}{2 k_2} || \tilde{V}_\xi  ||^2_F .
\end{align}
Differentiating the Lyapunov function, and substituting the proposed innovations $\Delta_1$ and $\Delta_2$ one obtains
\begin{align}
\dot{\Lyap} &= \tr\Big( (-\dot{\tilde{A}}) (I - \tilde{A})^\top \Big) + \frac{1}{k_2} \tr \Big( \tilde{V}_\xi^\top \dot{\tilde{V}}_\xi \Big) \notag\\
&= \tr\Big( \tilde{V}_\xi^\top (-\hat{A}^\top (I - \tilde{A})\tilde{A}^\top \hat{A}^{-\top} + \frac{1}{k_2} \Big[ \Ad_{\hat{A}^{-1}}\Delta_1, \hat{V}_\xi \Big] \notag\\
 &\hspace{0.3cm}+ \frac{1}{k_2} \Ad_{\hat{A}^{-1}} \Delta_2) \Big) + \tr\Big( ((I - \tilde{A}) \tilde{A}^\top)^\top \Delta_1 \Big) \notag \\
&= -k_1 \tr\Big( ((I - \tilde{A}) \tilde{A}^\top)^\top \mathbb{P}_\gothg \Big( ((I - \tilde{A}) \tilde{A}^\top) \Big) \Big) \notag \\
&= -k_1 \Big|\Big| \mathbb{P}_\gothg \Big((I - \tilde{A}) \tilde{A}^\top \Big) \Big|\Big|^2_F . \label{dLyap}
\end{align}
The derivative of the Lyapunov function is negative
semi-definite which in turn implies that $\Lyap$ is bounded.
Since $A$ and $A^{-1}$ are also bounded by assumption, it is straightforward to verify that $\ddot{\Lyap}$ is also bounded.
Applying Barbalat's lemma it follows that $\mathbb{P}_\gothg \Big((I - \tilde{A}) \tilde{A}^\top \Big)$ converges to zero.
It is easily verified that there exists $\ub{\mu} < \ob{\mu}$ such that
\[
\ub{\mu}  \|I-\tilde{A}\|_F^2 \leq \Big|\Big| \mathbb{P}_\gothg \Big((I - \tilde{A}) \tilde{A}^\top \Big) \Big|\Big|^2_F \leq \ob{\mu} \|I-\tilde{A}\|_F^2 ,
\]
in the neighbourhood of the identity.
It follows that $\tilde{A}$ converges to $I$.

It remains to show that $\tilde{a}$ also converges to zero.
The proof relies on a modified version of Barbalat's Lemma shown in the appendix (Lemma~\ref{lem:barbalat}) \cite{ms93-rap}.
From \eqref{eq:errorsystem_kinematics}, the derivative of $\tilde{A}$ can be rewritten as
\begin{equation}
\dot{\tilde{A}} = N(t) + M(t)
\end{equation}
with $N(t)=-\tilde{a}\tilde{A}$ and $M(t)=(-(\Ad_{\tilde{A}}V_{\mr{\xi}}-V_{\mr{\xi}})-\Delta_1)\tilde{A}$.
It is straightforward to verify that $\dot{N}(t)$ is bounded (so that $N(t)$ is uniformly continuous).
Since $\tilde{A}$ converges to $I$ and $\Delta_1$ converges to $0$, one verifies that $M(t)$ converges to zeros and hence, applying Lemma~\ref{lem:barbalat}, $\dot{\tilde{A}}$ converges to $0$ which in turn implies that $\tilde{a}$ also converges to zero. As for the stability of the equilibrium $E=(I,0)$, it is a direct consequence of
relations \eqref{eq:lyapunov_definition} and \eqref{dLyap}.
\end{proof}

%~~~~~~~~~~~~~~~~~~~~~~~~~~~~~~~~~~~~~~~~~~~~~~~~~~~~~~~~~~~~~~~~~~~~~~~~~~~~~~~~~~
\section{Simulation}
\label{sec:experiment}
%~~~~~~~~~~~~~~~~~~~~~~~~~~~~~~~~~~~~~~~~~~~~~~~~~~~~~~~~~~~~~~~~~~~~~~~~~~~~~~~~~~

To demonstrate the proposed observer design, we carried out a simulation of a hovercraft moving in the $\R^2$ plane, and specialised the general observer equations to a specific symmetry group.
The state of the hovercraft is parametrised by $\tT\grpG \triangleq \SE(2) \times \se(2)$, representing the pose and velocity of the vehicle.
The initial pose and velocity, and the linear and angular acceleration of the hovercraft were chosen such that the position of the hovercraft satisfied the defining equation of the Lissajous curve $(\sin(t), \sin(2t)) \in \R^2$ for all time $t$.
Such that
\begin{align*}
U_1 = \begin{pmatrix}
0 & 0 & 0\\
0 & 0 & 0\\
0 & 0 & 0
\end{pmatrix}, \hspace{0.3cm}
U_2 = \begin{pmatrix}
0 & -u_{\text{ang}}(t) & u_{\text{lin}}^x(t)\\
u_{\text{ang}}(t) & 0 & u_{\text{lin}}^y(t)\\
0 & 0 & 0
\end{pmatrix}, 
\end{align*}
where $u_{\text{lin}}$ and $u_{\text{ang}}$ are the corresponding linear and angular acceleration. 

The general observer design proposed in Section \ref{sec:observer} was implemented using the symmetry group $\SE(2)^\ltimes_{\se(2)}$ to demonstrate the design for this particular problem.
The origin pose and velocity of the observer were arbitrarily chosen.  
The gains of the observer were chosen to be $k_1 = 1.0$ and $k_2 = 1.0$.
Both the hovercraft motion and the observer equations were implemented using Euler integration with a time step of $dt = 0.001$s, and were simulated for a period of $15$s.
The trajectories of the positions of the hovercraft and the observer are shown in Figure \ref{fig:tse2_trajectory}(a).
The evolution of the log Lyapunov function, $\log(\Lyap)$ where $\Lyap$ is defined as in \eqref{eq:lyapunov_definition}, is shown in Figure \ref{fig:tse2_trajectory}(b).

\begin{figure}[!htb]
    \centering
    \includegraphics[width=0.43\linewidth]{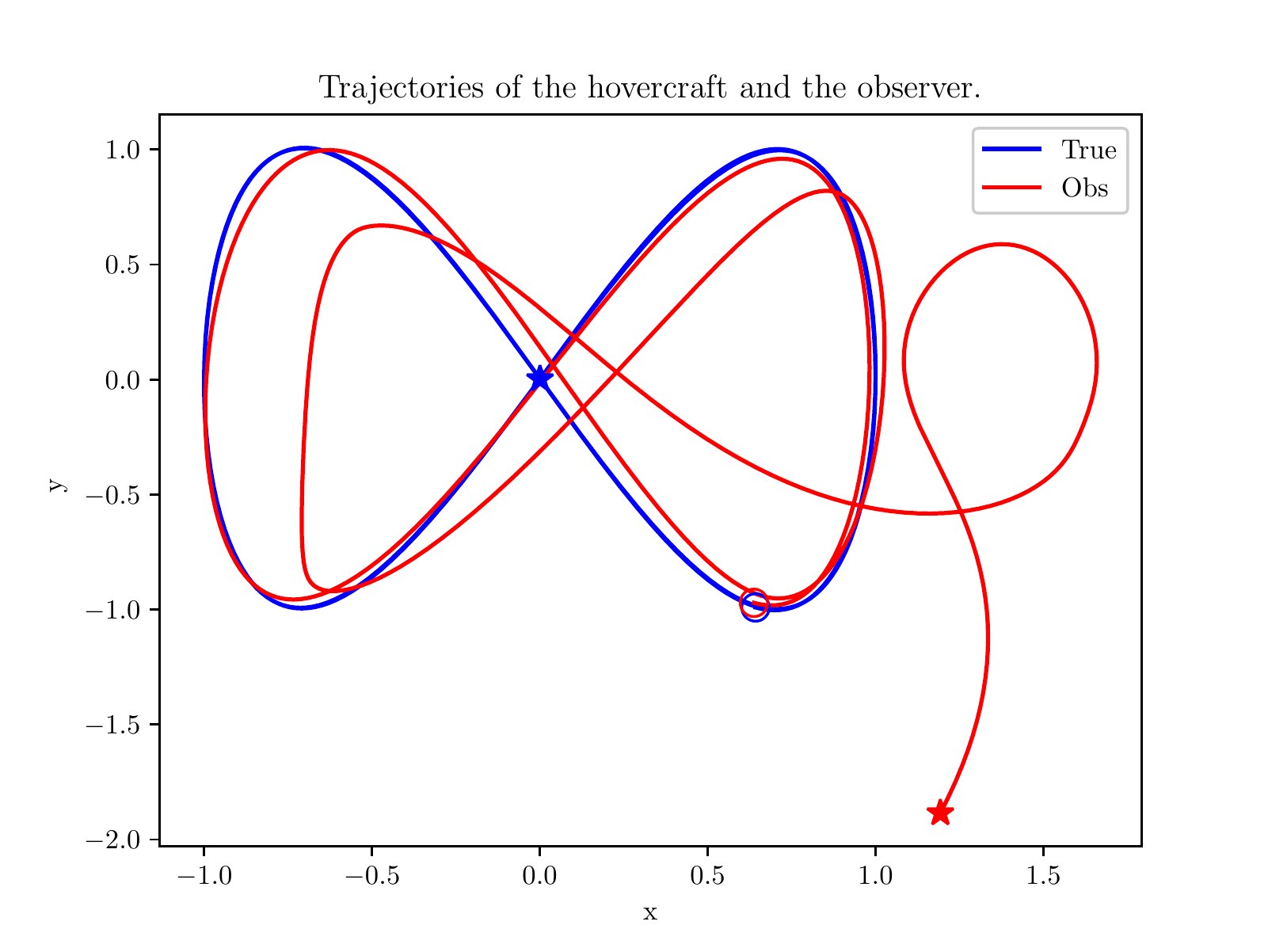}
    \hspace{0.4cm}
    \includegraphics[width=0.43\linewidth]{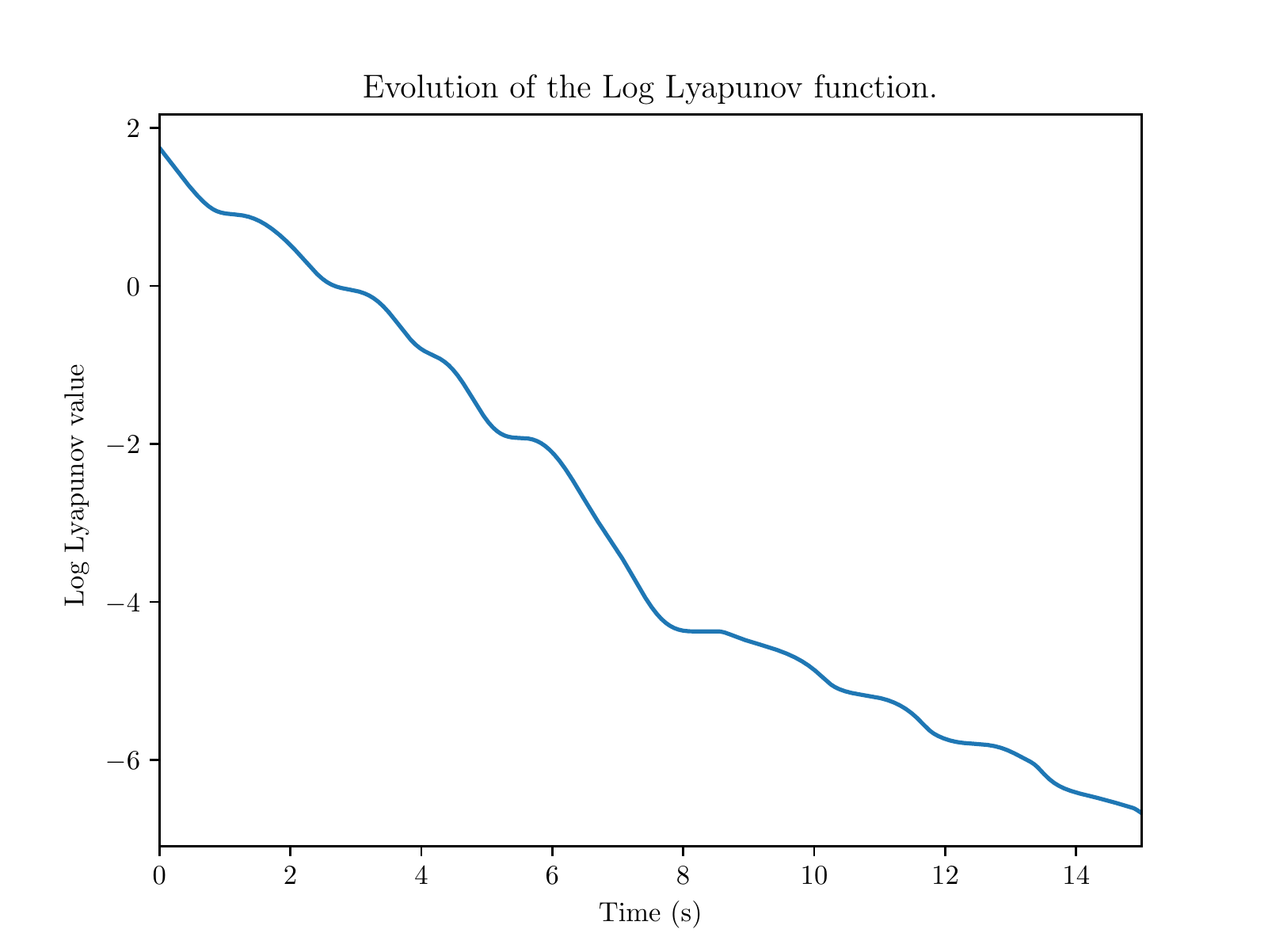}
    \caption{(a) The trajectory of the hovercraft (blue) and its observed trajectory (red).
    A $\star$ and $\circ$ are used to mark the start and end of each trajectory, respectively; (b) The evolution of the Log Lyapunov function showing the local exponential convergence of the observer error system.}
    \label{fig:tse2_trajectory}
\end{figure}

The convergence of the observer trajectory to the hovercraft trajectory shown in Figure \ref{fig:tse2_trajectory}(a) is influenced by the initial error in the velocity estimates.
This causes the shape of the initial trajectory of the observer to differ significantly from that of the hovercraft.
In Figure \ref{fig:tse2_trajectory}(b), the local exponential convergence of the observer is clear from the overall linear decrease of the $\log(\Lyap(t))$ cost.

%~~~~~~~~~~~~~~~~~~~~~~~~~~~~~~~~~~~~~~~~~~~~~~~~~~~~~~~~~~~~~~~~~~~~~~~~~~~~~~~~~~
\section{Conclusions}
\label{sec:conclusion}
%~~~~~~~~~~~~~~~~~~~~~~~~~~~~~~~~~~~~~~~~~~~~~~~~~~~~~~~~~~~~~~~~~~~~~~~~~~~~~~~~~~
In this paper we presented the equivariant systems theory and observer design for second order kinematic systems on matrix Lie group.
More precisely, we identified a symmetry group, its associated equivariant group actions, a system lift, and an observer that operates on the lifted state space.
We also introduce a virtual velocity input that acts on the first order component of the system kinematics, which is vital for the understanding of the equivariance of the system.
The performance of the proposed observer is analysed using Lyapunov stability analysis.
Although global convergence cannot be guaranteed for a general matrix Lie group, local exponential convergence is observed in a simulation of a simple hovercraft model moving on a 2D plane.
This work can be extended to include more general output function $h$ and the estimation of bias in sensor measurements.

%%%%%%%%%%%%%%%%%%%%%%%%%%%%%%%%%%%%%%%%%%%%%%%%%%%%%%%%%%%%%%%%%%%%%%%%%%%%%%%%
\section*{APPENDIX}
\label{app:Barbalat}

\begin{lemma}\label{lem:barbalat} \cite{ms93-rap}
Let $Z(t)$ denote a solution to the differential equation $\dot{Z}(t)=N(t)+M(t)$ with $N(t)$ uniformly continuous function.
Assume that $\lim_{t \rightarrow \infty} Z(t)=C$ and $\lim_{t \rightarrow \infty} M(t)=0$, with $C$ constant.
Then $\lim_{t \rightarrow \infty} \dot{Z}(t)=0$.
\end{lemma}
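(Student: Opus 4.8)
The plan is to deduce the conclusion from the two hypotheses by isolating the uniformly continuous summand $N$. Since $\dot{Z} = N + M$ and $M(t) \to 0$ by assumption, it suffices to prove that $N(t) \to 0$; the conclusion $\dot{Z}(t) \to 0$ then follows immediately. The reason this is not an instant corollary of the textbook Barbalat lemma is that $\dot{Z}$ itself is not assumed uniformly continuous --- only its summand $N$ is --- so one cannot directly invoke the ``a convergent function with uniformly continuous derivative has vanishing derivative'' form. The content of the lemma is precisely to show that the (possibly non-uniformly-continuous) perturbation $M$, being merely convergent to zero, does no harm.

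First I would argue by contradiction: suppose $N(t) \not\to 0$. Then there is $\epsilon > 0$ and a sequence $t_n \to \infty$ with $\|N(t_n)\| \geq \epsilon$, and after passing to a subsequence we may take the $t_n$ spaced so that the windows $[t_n, t_n + \delta]$ are pairwise disjoint, where $\delta > 0$ is the uniform-continuity modulus of $N$ associated to the level $\epsilon/2$. By uniform continuity, $\|N(t) - N(t_n)\| < \epsilon/2$ throughout each window, so $\|N(t)\| > \epsilon/2$ there and $N$ keeps a fixed ``direction'' (for scalar $N$ a fixed sign; in the vector/matrix case one projects onto the unit element $N(t_n)/\|N(t_n)\|$, which is $1$-Lipschitz and preserves the modulus). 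Integrating over the window then yields the fixed lower bound
\[
\Big\| \int_{t_n}^{t_n+\delta} N(\tau)\,\mathrm{d}\tau \Big\| \geq \frac{\epsilon}{2}\,\delta =: c > 0 .
\]

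The contradiction comes from estimating the same integral a second way. Since $Z$ solves the differential equation, $\int_{t_n}^{t_n+\delta} N = \big( Z(t_n+\delta) - Z(t_n) \big) - \int_{t_n}^{t_n+\delta} M$. Because $Z(t) \to C$, the function $Z$ is Cauchy at infinity, so $Z(t_n+\delta) - Z(t_n) \to 0$; because $M(t) \to 0$ and the window has \emph{fixed} length $\delta$, we also have $\|\int_{t_n}^{t_n+\delta} M\| \leq \delta \sup_{[t_n,\,t_n+\delta]} \|M\| \to 0$. Hence $\int_{t_n}^{t_n+\delta} N \to 0$, contradicting the lower bound $c > 0$. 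Therefore $N(t) \to 0$, and $\dot{Z} = N + M \to 0$, as claimed. The one delicate point --- and the place to be careful --- is the interplay in this integral identity: it is essential that the integration window have a fixed length $\delta$, so that $M \to 0$ forces its integral contribution to vanish, while it is the uniform continuity of $N$ that converts the pointwise bound $\|N(t_n)\| \geq \epsilon$ into a bound over the whole window and hence on its integral.
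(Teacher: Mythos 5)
The paper itself offers no proof of this lemma: it is quoted directly from the technical report of Micaelli and Samson \cite{ms93-rap}, so there is no in-paper argument to compare against. Judged on its own merits, your proposal is correct and self-contained. The reduction is sound: since $\dot{Z}=N+M$ and $M\to 0$, it suffices to show $N\to 0$, and you rightly observe that classical Barbalat cannot be invoked directly because $\dot{Z}$ itself need not be uniformly continuous. The window argument is correctly executed: the lower bound on $\bigl\|\int_{t_n}^{t_n+\delta}N\bigr\|$ via projection onto the fixed unit vector $u_n=N(t_n)/\|N(t_n)\|$ is valid (the functional $x\mapsto\langle x,u_n\rangle$ is linear of norm one and commutes with integration, giving $\langle N(t),u_n\rangle\geq\epsilon-\epsilon/2$ throughout the window), and the competing upper estimate correctly exploits the \emph{fixed} window length, so that $\sup_{[t_n,t_n+\delta]}\|M\|\to 0$ annihilates the perturbation term while $Z(t_n+\delta)-Z(t_n)\to 0$ by convergence of $Z$. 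Two minor points worth flagging: the identity $Z(t_n+\delta)-Z(t_n)=\int_{t_n}^{t_n+\delta}(N+M)$ tacitly requires $Z$ to be absolutely continuous (equivalently $\dot{Z}$ locally integrable) and $M$ measurable, both implicit in the hypothesis that $Z$ solves the differential equation; and the pairwise disjointness of the windows, which you arrange by passing to a subsequence, is never actually used, since the contradiction comes from two incompatible estimates of a single window integral. Finally, your approach is genuinely necessary rather than a reduction: the tempting shortcut of setting $Y(t)=Z(t)-\int_0^t M(\tau)\,\td\tau$ so that $\dot{Y}=N$ is uniformly continuous does not work, because $M\to 0$ does not imply that $\int_0^t M$ converges, so $Y$ need not have a limit and classical Barbalat still does not apply.
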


%===============================================================================

%===============================================================================
%% bibliography
%% Use the \bibliographystyle{alpha} to compile the bibligraphy in the main directory.
%% Fetch the .bbl file from the directory above.
%% copy the .bbl file directly in here
\bibliographystyle{plain}
%\bibliography{egbib}
%===============================================================================

%%%%%%%%%%%%%%%%%%%%%%%%%%%%%%%%%%%%%%%%%%%%%%%%%%%%%%%%%%%%%%%%%%%%%%%%%%%
\end{document}